  \providecommand\BibTeX{{%
    \normalfont B\kern-0.5em{\scshape i\kern-0.25em b}\kern-0.8em\TeX}}}
\theoremstyle{definition}
\newtheorem{definition}{Definition}[section]
\newcommand{\R}{\mathbb{R}}
\newcommand{\mat}[1]{\bm{#1}}
\newcommand{\W}{\mathcal{W}}
\definecolor{myGreen}{rgb}{0,0.5,0}
\DeclareMathOperator{\argmin}{argmin}
\begin{document}

\fancyhead{}

\title{Fine-Grained System Identification of Nonlinear Neural Circuits}

\author{Dawna Bagherian}
\affiliation{%
  \institution{California Institute of Technology}
  \city{Pasadena, CA}
  \country{USA}}
\email{dawna@caltech.edu}

\author{James Gornet}
\affiliation{%
  \institution{California Institute of Technology}
  \city{Pasadena, CA}
  \country{USA}}
\email{jgornet@caltech.edu}

\author{Jeremy Bernstein}
\affiliation{%
  \institution{California Institute of Technology}
  \city{Pasadena, CA}
  \country{USA}}
\email{bernstein@caltech.edu}

\author{Yu-Li Ni}
\affiliation{%
  \institution{California Institute of Technology}
  \city{Pasadena, CA}
  \country{USA}}
\email{ynni@caltech.edu}

\author{Yisong Yue}
\affiliation{%
  \institution{California Institute of Technology}
  \city{Pasadena, CA}
  \country{USA}}
\email{yyue@caltech.edu}

\author{Markus Meister}
\affiliation{%
  \institution{California Institute of Technology}
  \city{Pasadena, CA}
  \country{USA}}
\email{meister@caltech.edu}

\renewcommand{\shortauthors}{Bagherian, et al.}


\begin{abstract}
We study the problem of sparse nonlinear model recovery of high dimensional compositional functions.  Our study is motivated by emerging opportunities in neuroscience to recover fine-grained models of biological neural circuits using collected measurement data.  Guided by available domain knowledge in neuroscience, we explore conditions under which one can recover the underlying  biological circuit that generated the training data. 
Our results suggest insights of both theoretical and practical interests.
Most notably, we find that a sign constraint on the weights is a necessary condition for system recovery, which we establish both theoretically with an identifiability guarantee and empirically on simulated biological circuits.  We conclude with a case study on retinal ganglion cell circuits using data collected from mouse retina, showcasing the practical potential of this approach.  
\end{abstract}

\begin{CCSXML}
<ccs2012>
   <concept>
       <concept_id>10003033.10003083.10003090</concept_id>
       <concept_desc>Networks~Network structure</concept_desc>
       <concept_significance>300</concept_significance>
       </concept>
   <concept>
       <concept_id>10010405.10010444.10010095</concept_id>
       <concept_desc>Applied computing~Systems biology</concept_desc>
       <concept_significance>300</concept_significance>
       </concept>
   <concept>
       <concept_id>10010405.10010444.10010087.10010091</concept_id>
       <concept_desc>Applied computing~Biological networks</concept_desc>
       <concept_significance>500</concept_significance>
       </concept>
   <concept>
       <concept_id>10010147.10010341.10010342.10010343</concept_id>
       <concept_desc>Computing methodologies~Modeling methodologies</concept_desc>
       <concept_significance>300</concept_significance>
       </concept>
   <concept>
       <concept_id>10010147.10010257.10010293.10010294</concept_id>
       <concept_desc>Computing methodologies~Neural networks</concept_desc>
       <concept_significance>300</concept_significance>
       </concept>
 </ccs2012>
\end{CCSXML}

\ccsdesc[300]{Networks~Network structure}
\ccsdesc[300]{Applied computing~Systems biology}
\ccsdesc[500]{Applied computing~Biological networks}
\ccsdesc[300]{Computing methodologies~Modeling methodologies}
\ccsdesc[300]{Computing methodologies~Neural networks}

\keywords{nonlinear system identification, neural networks, neuroscience}


\maketitle

\section{Introduction}

Sparse system estimation or identification is a core problem in science and engineering \cite{hastie2015statistical,ljung1999system,mccarter2014sparse,zheng2018dags}, where the goal is to recover from training data a sparse set of non-zero parameters in a high-dimensional function class.  Examples include estimating dynamical systems \cite{ljung1999system,fry2011sign}, graphical models or inverse covariance matrices \cite{mccarter2014sparse,zheng2018dags}, and linear effects models \cite{hastie2015statistical}.  Thus far, the bulk of prior work has focused on  linear models, which have limited applicability when the underlying system is inherently nonlinear.



In many domains there is a growing need for reliably estimating a sparse nonlinear system from data. We are particularly motivated by applications to circuit neuroscience, where the goal is to estimate the fine-grained structure and connection weights of a neural circuit.  A typical data collection approach is to apply stimuli to the sensory neurons in the input layer of the circuit, and measure the responses of the output neurons \cite{meister1994mea}. 
%
Recent advancements in biotechnology have yielded increasingly powerful methods for collecting neuronal data, including genetic tools for labeling neurons \cite{jo2018amgclabel}, dense electrode arrays for recording action potentials from living cells \cite{jun2017neuropixel}, genetically encoded fluorescent activity reporters \cite{chen2013gcamp6}, and complex surgical procedures to provide access to specific brain regions \cite{feinberg2014sc}.  

Motivated by the neuroscience modeling application, we aim to estimate a sparse nonlinear model from a high-dimensional compositional function class, i.e., a deep neural network.  The key question we ask is: \textit{under what conditions can one recover the parameters of a sparse neural network?} To keep our inquiry practically grounded, we restrict to exploring conditions that abstractly capture conventional neuroscience domain knowledge. 

In this paper, we study conditions for successful sparse parameter recovery of a nonlinear composition function with skip connections.  Notably, we show that having each layer of weights be sign constrained is a necessary condition (in addition to other more standard necessary conditions).  We establish this result both theoretically with an identifiability guarantee for a specific class of neural networks, and empirically on a broader class of simulated biological circuits.   Our experiments show a striking dichotomy: $\ell_1$-regularized regression fails to recover the correct parameters without sign constraints, and succeeds with them.  Our results also provide guidance as to the dependence of system identification on the training dataset, the structure of the circuit under study, and the quantity of domain foreknowledge provided to the neural network.

 
 We conclude with a case study of the mouse retina. We measure responses to visual input from four types of retinal ganglion cells in the live mouse retina using a multi-electrode array \cite{meister1994mea}. We confirm the current scientific understanding of the structure of one of these four and suggest an additional structural component. We also provide a hypothesis as to the structure of the other three types, which can later be confirmed with biological experiments. These results showcase the practical potential of this approach.
\section{Related work}

\textbf{System identification and structure recovery.} Broadly speaking, system identification is the use of statistical methods to build models of systems from measured data \cite{ljung1999system}.  System identification is a key tool in modeling dynamical systems, which includes early work on neural system identification for control systems \cite{kuschewski1993application}.  A fundamental issue that arises is \textit{identifiability} \cite{bellman1970structural}---that is, when can one uniquely recover the true system. This affects the reliability of the results for downstream scientific analysis. 

Identifiability has been studied theoretically for nonlinear neural networks \cite{sussman,fefferman,albertini,helmut,rolnick,schroderneurips}, ``linear networks'' in the context of matrix factorization \cite{donoho_when_2004,laurberg_theorems_2008,nnmf_revisited}, and linear autoregressive models \cite{fry2011sign}. In Section \ref{sec:theory-result}, we introduce and discuss some of the theoretical concepts that have a bearing on identifiability.  As stated before, our main goal is to provide a thorough theory-to-practice investigation grounded in real neuroscience modeling challenges.


A related concept is support recovery, where the main goal is to discover which parameters of a model are non-zero \cite{tibshirani1996regression,hastie2015statistical,shen2017iteration, somani2018support,li2015sparsistency}. Support recovery can be thought of as a subgoal of full system identification.
It is commonly studied in sparse linear systems that have few non-zero parameters.  While biological neural networks are also sparse \cite{gollisch2010, mizrahi2013}, they are nonlinear multi-layer models for which theoretical results in sparse linear support recovery do not directly apply.  
Nonetheless, we show that, under suitable conditions, one can employ $\ell_1$-regularized regression (that is commonly used for estimating sparse linear models \cite{tibshirani1996regression}) to reliably estimate sparse neural networks with limited training data; an interesting future direction would be to establish \textit{sparsistency} guarantees \cite{li2015sparsistency}.

Another related concept is structure discovery in (causal) graphical models \cite{zheng2018dags,mccarter2014sparse,sohn2012joint,saeed2020causal,tillman2014learning}.  A typical setting is to recover the structure of a directed acyclic network (i.e., which edges are non-zero) that forms the causal or generative model of the data.  This setting is very similar to ours with a few differences.  First, the goal of structure discovery in graphical models is to recover the direction of the edges in addition to the weights, whereas in our setting all the edge directions are known a priori.  Second, the training data for structure discovery is typically fully observed in terms of measuring every node in the network, whereas for our setting we only observe the inputs and final outputs of the network (and not the measurements of nodes in the hidden layers).  Like in sparse system identification, most prior work in structure discovery of graphical models is restricted to the linear setting.

\textbf{Interplay between neuroscience and deep learning.} The interplay between neuroscience and deep learning has a rich history that ebbs and flows between one field driving progress in the other.  For instance, convolutional neural networks (CNNs) are originally inspired by biology \cite{fukushima1980neocognitron}, and were designed to have computational characteristics of the vertebrate visual system. More recently, researchers have endeavored to understand biological, and specifically neuronal, systems using artificial neural networks \cite{mcintosh2016retina, tanaka2019retina, klindt2017neural,seeliger2019end,mcintosh2015deep,abbasi2018deeptune,yamins2014performance, schroderneurips}. Such work has been largely restricted to coarse-grained analyses that characterize computation of regions of the brain, rather than that of individual neurons interacting within a circuit, or to the revelation of microcircuit motifs within trained CNNs that mirror those found in biology. 

In contrast, we are interested in fine-grained circuit modeling where the neurons in the learned neural network have a one-to-one correspondence with individual biological units. Preliminary efforts in this direction have been made for a smaller model of one layer of retinal synapses \cite{schroderneurips}, but to the best of our knowledge, this work covers only parameter estimation, not structure estimation, and the general problem has not been studied theoretically or empirically via systematic simulations, nor has it been tested on real data from deeper circuits.

\section{Problem Statement}
\label{sec:problem}

We formulate the problem of system identification in nonlinear feed-forward networks as follows. Consider a function $f(\mat{x};\mat{W}, \mat{b})$ known as the \textit{network}. The network is parameterized by an unknown weight vector $\mat{W}$ and bias $\mat{b}$. We assume that we can query the network's nonlinear input-output mapping:
\begin{equation}\label{eq:input-output}
    \mat{x} \mapsto f(\mat{x};\mat{W},\mat{b}) \equiv \mat{y}.
\end{equation}
That is, we may apply input stimuli $\mat{x}$ and record the output $\mat{y}$. 
Section \ref{sec:mouse} describes how we collect such data from the mouse retina.

Our goal is to recover the true $\mat{W}$ and $\mat{b}$ given such $(\mat{x},\mat{y})$ queries. 
A first question that arises is whether recovering the true $\mat{W}$ and $\mat{b}$ is possible, even with infinite data---i.e., whether $\mat{W}$ and $\mat{b}$ are uniquely identifiable.  Other questions include how to accurately recover $\mat{W}$ and $\mat{b}$ (or at least their non-zero support) given finite and noisy data, which data points to query, and what types of domain knowledge can aid in this process.  

In practice, estimating the true $\mat{W}$ and $\mat{b}$ given training data is tackled as a regression problem.  We are particularly interested in the case where $\mat{W}$ is sparse. The conventional way to encourage sparsity is to use $\ell_1$-regularization \cite{tibshirani1996regression}, which we will also employ.  We discuss in Section \ref{sec:alg} practical considerations through extensive evaluation of simulated biological circuits.


\textbf{Summary of Domain Knowledge.} 
Our goal is to not only establish conditions where system identification of nonlinear feedforward networks is possible, but also that those conditions be practically relevant.  Guided by neuroscience domain knowledge, we study circuits with the following properties:
\begin{enumerate}[label=(\roman*),itemsep=0.2pt,topsep=0pt]    
    \item The nonlinearity of individual neurons is well understood, and can be well modeled using ReLUs or leaky ReLUs \cite{baccus2002}.
    \item The dominant computation of the neural circuit is feedforward, which is true for circuits found in the retina \cite{wassle1991, gollisch2010}.
    \item Many potential weights can be preemptively set to zero, due to the implausibility of neurons in various spatial configurations being connected to one another \cite{dunn2014convergence, siegert2009addressbook}.
    \item All weights can be sign constrained (either non-negative or non-positive), due to knowing the inhibitory or excitatory behavior of each neuron \cite{wassle1991, gollisch2010, raviola1982}.
\end{enumerate}
Items (i) \& (ii) above imply knowing the functional form (i.e., multi-layer perceptron with known number of layers, maximal number of neurons in each layer, and form of the nonlinearity). Item (iii) implies that one can reduce the number of free parameters in the model, thus easing the burden of learning (although still requiring high-dimensional sparse estimation). Item (iv) is perhaps the most interesting property, as it effectively constrains the weights to be within a single known orthant.  We show in Section \ref{sec:theory-result} that the sign constrained condition is an important sufficient condition for proving identifiability, and we show empirically in Section \ref{sec:alg} that $\ell_1$-regularized regression can succeed in system identification on sign constrained networks and fails on unconstrained networks.

\section{Theoretical Analysis}\label{sec:theory-result}

Motivated by the neuroscience domain knowledge discussed in Section \ref{sec:problem}, we now establish factors
that govern fine-grained identifiability of neural networks---a topic that has been studied since the 1990s \citep{sussman,albertini,fefferman}. To develop the core ideas, it will help to consider the following neural network:
\begin{equation}\label{eq:2-layer}
    f(\mat{x}) := \mat{W_2}\max(\mat{0}, \mat{W_1}\mat{x} + \mat{b_1}) +
  \mat{W_3x} + \mat{b_2},
\end{equation}
for weights $\mat{W_1}\in\R^{n_1\times n_0}$, $\mat{W_2}\in\R^{n_2\times n_1}$, 
$\mat{W_3}\in\R^{n_2\times n_0}$ and biases $\mat{b_1}\in\R^{n_1}$, 
$\mat{b_2}\in\R^{n_2}$.

\begin{figure}
\centering
\includegraphics[width=0.40\textwidth]{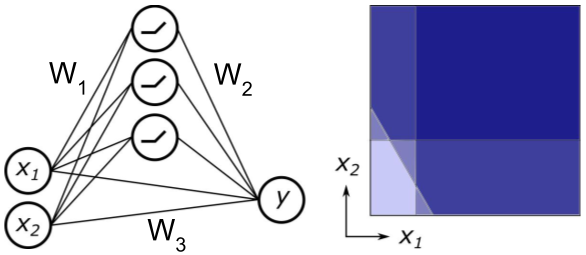}
\vspace{-0.1in}
\caption{Left: A one hidden layer
    ReLU network with skip connection (Equation \ref{eq:2-layer}). Right: The nonlinearity transitions occur on hyperplanes in input space.}
\label{fig:relu_network}
\end{figure}

There are three reasons why this is a sensible network to consider. First, the network's relatively simple structure facilitates theoretical insight. Second, this network bears a resemblance to neural (sub-)circuits found in the retina (see Fig. \ref{fig:bigsim_circuit}). And third, if one ignores the skip connections, biases, and nonlinearity, then the identification problem reduces to factorization of $\mat{W_2W_1}$, allowing us to compare to results on matrix factorization  \cite{donoho_when_2004,laurberg_theorems_2008,nnmf_revisited}. 

We shall now provide a result on identifiability for this class of nonlinear networks. The proof of Theorem \ref{thm:identifiability} is given in Appendix~\ref{sec:proof}.

\begin{restatable}{theorem}{identifiability}
\label{thm:identifiability} 
  Suppose that network (\ref{eq:2-layer}) satisfies the following three conditions:
  \begin{enumerate}[label=(\roman*),itemsep=0.2pt,topsep=0pt]
  \item no column of $\mat{W_2}$ or row of $\mat{W_1}$ is entirely zero;
  \item no two rows of
  $\mat{W_1}$ are collinear;
  \item $\mat{W_1}$
  is nonnegative.
  \end{enumerate}
  Let $\mat{P}$ denote an unknown permutation matrix and $\mat{D}$ an unknown positive diagonal matrix. Then, by input-output queries of the form (\ref{eq:input-output}), we may recover $\mat{DPW_1}$, $\mat{DPb_1}$, $\mat{W_2P^{-1}D^{-1}}$, $\mat{W_3}$ and $\mat{b_2}$.
\end{restatable}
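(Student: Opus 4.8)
The plan is to exploit the fact that $f$ in (\ref{eq:2-layer}) is a continuous piecewise-linear function whose only nonsmooth behavior is inherited from the ReLUs, and to read the parameters off from the geometry of its ``kinks.'' Associate to each hidden unit $i$ the hyperplane $H_i = \{\mat{x} : \inner{\mat{w}_1^{(i)}}{\mat{x}} + b_1^{(i)} = 0\}$, where $\mat{w}_1^{(i)}$ is the $i$-th row of $\mat{W_1}$ and $\mat{w}_2^{(i)}$ the $i$-th column of $\mat{W_2}$. Since the skip term $\mat{W_3x}+\mat{b_2}$ is globally affine it creates no kinks, so the non-differentiability locus of $f$ is contained in $\bigcup_i H_i$. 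Crossing $H_i$ from the inactive side ($\inner{\mat{w}_1^{(i)}}{\mat{x}} + b_1^{(i)} < 0$) to the active side, the Jacobian of $f$ jumps by the rank-one matrix $\mat{w}_2^{(i)}(\mat{w}_1^{(i)})^\top$. By condition (i) this jump is nonzero, so every $H_i$ is genuinely visible; by condition (ii) the normals are pairwise non-collinear, so the $H_i$ are pairwise non-parallel and their pairwise intersections have codimension two, meaning at a generic point of each $H_i$ exactly one unit toggles and contributes a single clean rank-one jump.

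First I would recover the hyperplane arrangement. Because we may evaluate $f$ at arbitrary inputs, in the infinite-query idealization we can treat $f$ as a known piecewise-linear function, locate its nonsmooth locus, and decompose it into the individual $H_i$ using the non-parallel property from (ii). Each $H_i$ fixes its normal direction up to a nonzero scalar, together with a matching offset; I then resolve the remaining sign ambiguity via condition (iii), choosing the unique nonnegative representative of the normal (well-defined because, by (i), the row is nonzero, so its two sign choices are a nonzero nonnegative vector and a nonzero nonpositive one). Normalizing each recovered normal $\hat{\mat{n}}_i$ to unit length realizes a particular choice of positive scales $d_i = 1/\norm{\mat{w}_1^{(i)}}$ and an arbitrary labeling of the units—exactly the freedoms captured by the positive diagonal $\mat{D}$ and permutation $\mat{P}$. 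This yields the rows of $\mat{DPW_1}$ and the matching entries of $\mat{DPb_1}$.

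Next I would recover $\mat{W_2}$ and the skip parameters. Applying the measured rank-one jump $J_i$ across $H_i$ to the recovered unit normal gives $J_i\hat{\mat{n}}_i = \mat{w}_2^{(i)}\norm{\mat{w}_1^{(i)}}$, i.e. the $i$-th column of $\mat{W_2}$ scaled by the reciprocal factor $1/d_i$; collecting these yields $\mat{W_2P^{-1}D^{-1}}$, whose scales cancel against those in $\mat{DPW_1}$ so that the hidden-layer contribution is reproduced exactly. Finally, in any activation region the Jacobian equals $\mat{W_3} + \sum_{i\text{ active}} \mat{w}_2^{(i)}(\mat{w}_1^{(i)})^\top$; since every rank-one term is now known and the active set at any queried point is determined by the recovered $H_i$, subtracting the active terms from a measured Jacobian isolates $\mat{W_3}$ uniquely (condition (iii) guarantees a nonempty common inactive region, e.g. at $\mat{x} = -M\mat{1}$ for large $M$, giving an especially clean read-out). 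Evaluating $f$ at a single point and subtracting $\mat{W_3x}$ and the known hidden contributions then recovers $\mat{b_2}$.

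The crux, and the step I expect to be most delicate, is the attribution argument: rigorously showing that the nonsmooth locus of $f$ is \emph{exactly} $\bigcup_i H_i$ with no cancellations or spurious coincidences, and that each piece can be assigned unambiguously to one unit. This is precisely where (i) and (ii) are used (nonzero jumps; distinct, non-parallel hyperplanes), while (iii) is what breaks the residual per-unit sign symmetry—otherwise, via the identity $\max(0,z) = z + \max(0,-z)$, flipping the sign of a row of $\mat{W_1}$ can be absorbed into a compensating change of $\mat{W_3}$ and $\mat{b_2}$, leaving $f$ invariant but $\mat{W_3}$ non-identifiable. Verifying that $\mat{P}$, $\mat{D}$, and this sign flip exhaust all symmetries of $f$, so that the recovered quantities carry no further ambiguity, is the remaining bookkeeping.
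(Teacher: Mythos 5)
Your proposal is correct and follows essentially the same route as the paper's proof: recover the hyperplanes $H_j$ from the nonsmooth locus of the piecewise-linear $f$ (condition (i) making every kink visible in some output, condition (ii) making the hyperplanes distinct), read off the columns of $\mat{W_2}$ from the rank-one Jacobian jumps across each $H_j$, and use nonnegativity of $\mat{W_1}$ to locate a region deep in the negative orthant where all hidden units are inactive, isolating $\mat{W_3}$ and then $\mat{b_2}$. If anything, you are more explicit than the paper in pinpointing why condition (iii) is needed---the per-unit sign-flip symmetry $\max(0,z)=z+\max(0,-z)$, which the skip connection could otherwise absorb into $\mat{W_3}$ and $\mat{b_2}$---a point the paper's proof leaves implicit when it asserts recovery of the hyperplane data ``up to scalings.''
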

The appearance of permutation matrix $\mat{P}$ reflects the invariance of a two-layer network to permutations of its hidden units. Also, by positive homogeneity of the $\max$ function, the output synapses of a hidden unit (columns of $\mat{W_2}$) may be scaled up by some $\alpha>0$ provided that the input synapses (rows of $\mat{W_1}$) are scaled down by $1/\alpha$. This gives rise to the diagonal matrix $\mat{D}$. These symmetries are innate to two-layer systems---the same issue is present in matrix factorization \citep[Definition~4]{nnmf_revisited}.

This theorem is somewhat typical of the results in the identifiability literature. For example, in 1992 \citet{sussman} proved a similar result for a network with one hidden layer, $\tanh$ nonlinearity, and no skip connections. In contrast, our result applies to networks with skip connections and relu nonlinearity. 

\textbf{Discussion of preconditions and assumptions.}
Condition (i) imposes that every hidden unit must be connected to both the input and output of the circuit, and condition (ii) imposes that no hidden unit is computationally redundant with another. These conditions are intuitively important for identifiability. The most substantive preconditon is condition (iii), which imposes a sign constraint on the synapses at the first layer. In the neuroscience context, this would correspond to knowing that all of the synapses in the first layer are excitatory as is the case for the outgoing synapses of bipolar cells in the retinal circuit (see Fig. \ref{fig:bigsim_circuit}). This demonstrates that domain knowledge of neuronal cell types can simplify both the theoretical analysis as well as the recovery of the neural network's weights. 

Our theoretical analysis has some notable limitations.  For instance, our result requires querying the network on inputs $\mat{x}$ lying in the negative orthant, which is not biologically plausible in cases where $\mat{x}$ is modeling the output of a layer of rectified nonlinear neurons. Additional constraints on the bias $\mat{b_1}$ would be required to guarantee identifiability given only non-negative input queries.  Furthermore, our result makes no comment on sample complexity or dealing with non-convexity of the underlying optimization problem, which are all interesting directions for future work.  We do, however, provide an extensive empirical study of such questions in Section \ref{sec:alg}.

\textbf{Connection to identifiability in other systems.}
The preconditions of Theorem~\ref{thm:identifiability} are mild in comparison to results in the nonnegative matrix factorisation literature, which require strong sparsity conditions on $\mat{W_1}$ to guarantee identifiability of $\mat{W_2W_1}$ \citep{donoho_when_2004,laurberg_theorems_2008,nnmf_revisited}. Theorem \ref{thm:identifiability} suggests that---surprisingly---the presence of nonlinearities can make system identification \textit{easier}. This is because the location of the nonlinear thresholds in input space is what reveals the entries of $\mat{b_1}$ and $\mat{W_1}$ up to a scaled permutation. We illustrate this in Figure \ref{fig:relu_network}.

\section{Empirical Analysis \& Practical Considerations}
\label{sec:alg}

We now empirically analyze conditions under which $\ell_1$-regularized regression can lead to successful fine-grained identification of neural circuits, and cross-reference with the theoretical results from Section \ref{sec:theory-result} where appropriate.
Practical system identification of neural networks can be challenging even with significant constraints imposed (e.g., many weights preemptively constrained to zero). Most applications of artificial neural networks (ANNs) are not concerned with identification of the ``true'' network, but rather focus on predictive performance (which can be achieved by many parameterizations). Therefore, it is important to understand, operationally, when system identification can be reliably performed. 

\begin{figure*}  
 \includegraphics[width=0.7\textwidth]{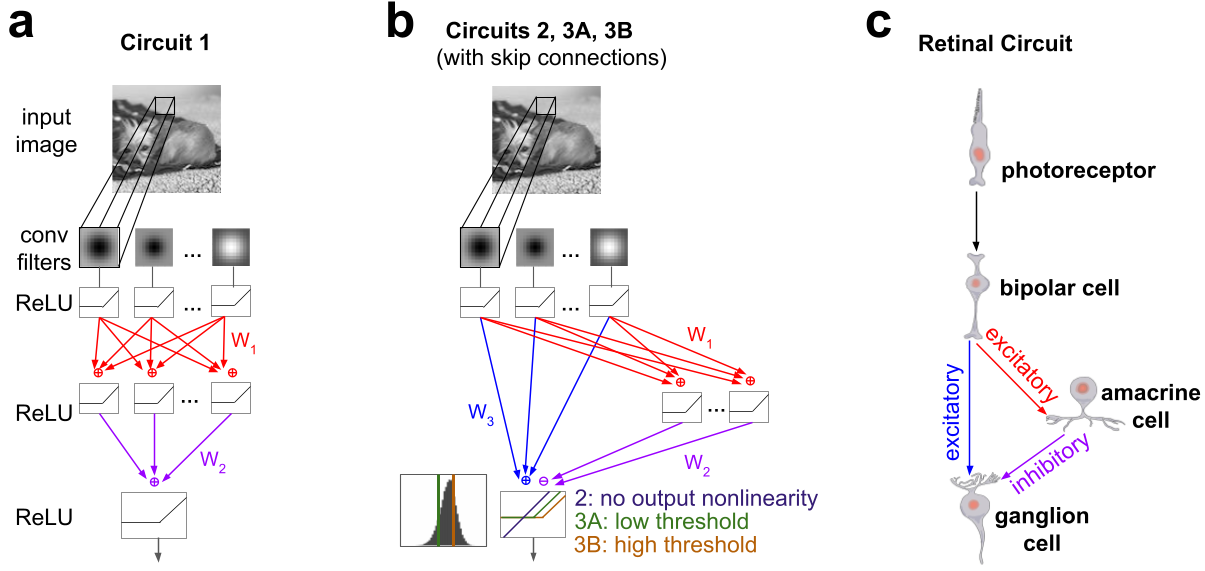}
 \vspace{-0.1in}
   \caption{Networks studied in simulation. \textbf{(a)} Simulated circuit. \textbf{(b)} Three simulated circuits with skip connections, varying in the use of output nonlinearity. \textbf{(c)} The retinal circuit that inspired the models in (a) and (b). The photoreceptor layer is treated as linear, and the convolutional layers in (b) and (c) represent the combined computation of photoreceptors and bipolar cells.}
  \label{fig:bigsim_circuit}
\end{figure*}

Since we assume that the true neural circuit is sparse, we employ $\ell_1$-regularized regression.
Given a training set $S = \{(x,y)\}$, tuning parameter $\lambda$, neural circuit $f(\cdot;\mat{W})$, and set of allowable weights $\W$, the optimization problem can be written as:
\begin{equation}
\mat{W}_k \leftarrow \argmin_{\mat{W}\in \W} \sum_{(\mat{x},y)\in S} \|f(\mat{x};\mat{W}) - y\|^2 + \lambda |\mat{W}|.\label{eq:optimizationproblem1}
\end{equation}
Due to the nonconvex nature of this optimization, it is preferable to run it with multiple ($K$) random initializations, thereby producing a set of solutions $\mat{W}_1,\ldots,\mat{W}_K$.
 
Using controlled simulated settings, we performed a systematic study to understand the dependence of successful system identification on the following factors:
\begin{enumerate}[label=(\roman*),itemsep=0.2pt,topsep=0pt]    
    \item The design of the function class $f(\cdot; \mat{W}, \mat{b})$, and in particular the presence of skip connections in the architecture.
    \item The number of non-zero weights $\mat{W}\in\W$, via preemptively setting many connections to zero.
    \item The use of sign constraints on the weights $\mat{W}\in\W$.
    \item The use of $\ell_1$ regularization to encourage weight sparsity.
    \item The design of the training dataset, i.e., which data points to query.
\end{enumerate}

In each simulated experiment, we created a sparse oracle network, and then trained a second network to recover that structure via the optimization problem in Equation \ref{eq:optimizationproblem1}, while varying the constraints, regularizers and datasets.

We define two measures of success:
\begin{definition}[System Recovery Score]\label{def:sysrs}
Let the {\it system recovery score} be defined on the vectorized weight matrices as:
\begin{equation}\label{eq:system_recovery}
R_{\text{sys}}=2\times\frac{\mat{W_\text{learned}}\cdot \mat{W}_\text{oracle}}{|\mat{W}_\text{learned}|^2+|\mat{W}_\text{oracle}|^2}.
\end{equation}
\end{definition}

\begin{definition}[Support Recovery Score]\label{def:suprs} We will denote by $\overline{\mat{W}}$ the vector of the same size as $\mat{W}$ where $\overline{\mat{W}}_i=0$ if $\mat{W_i}=0$ and 1 otherwise. Let the {\it support recovery score} be defined on the vectorized and binarized weight matrices as:
\begin{equation}\label{eq:support_recovery}
R_{\text{supp}}=2 \times \frac{\overline{\mat{W}}_\text{learned}\cdot \overline{\mat{W}}_\text{oracle}}{|\overline{\mat{W}}_\text{learned}|^2+|\overline{\mat{W}}_\text{oracle}|^2}.
\end{equation}
\end{definition}

Both measures are related to cosine similarity and the Jaccard coefficient.  The System Recovery Score achieves perfect performance when the exact system is recovered with the exact weights.  The Support Recover Score achieves perfect performance whenever the non-zero support (i.e., the structure) is recovered exactly. 

\subsection{Summary of findings} 
Fig. \ref{fig:bigsim} shows our main findings on the practicality of the approach.  For all simulated circuits, we are able to reliably recover most of the structure and parameters using $\ell_1$-regularized regression, when there are around 500 non-zero, sign-constrained free parameters in the network.
 As we will see in Section \ref{sec:mouse}, the biological problems we are interested in have a few hundred free parameters, and so are closely modeled by the darkest curves in Figures \ref{fig:bigsim} and \ref{fig:failure_modes}. 
 
 Fig. \ref{fig:failure_modes} shows that successful recovery can be contingent on the use of $\ell_1$-regularization as well as sign constraints.  In particular, we see that $\ell_1$-regularization is especially crucial when the circuit has skip connections (as is common in the retina), and that the estimation problem fails completely without sign constraints.  The former finding potentially suggests skip connections as an important factor for finite-sample analysis of system identification of nonlinear circuits, and the latter finding supports our theoretical finding that sign constraints are an important condition for guaranteeing identifiability.

\subsection{Network architecture}
We focus on the most generic network architectures that are prevalent in retinal biology---though similar architectures arise in gene regulatory networks \cite{mccarter2014sparse}. We call these networks: 
\begin{itemize}
    \item Circuit 1:  three layers, with the first layer convolutional and fixed, and the output layer nonlinear (Fig \ref{fig:bigsim_circuit}a).
    \item Circuit 2: three layers, with the first layer convolutional and
fixed, the output layer linear, including skip
connections (Fig \ref{fig:bigsim_circuit}b).
    \item Circuit 3A: Circuit 2 with the addition of a low-threshold ReLU output nonlinearity (Fig \ref{fig:bigsim_circuit}b).
    \item Circuit 3B: Circuit 2 with the addition of a high-threshold ReLU output nonlinearity (Fig \ref{fig:bigsim_circuit}b).
\end{itemize}
Note that Circuit 2 is, in fact, the circuit studied theoretically in Section \ref{sec:theory-result}, but with a restricted input domain. In the retina, one can think of the output of the first layer as modeling bipolar cell glutamate release rates, the second layer as modeling amacrine cells, and the third as modeling ganglion cells, though this ``feedforward loop'' circuit motif also appears in other domains, most prominently in transcriptional regulatory networks \cite{alon2020sysbio, gui2016}.

For each of the four architectures, we constructed a specific oracle circuit by selecting a fixed set of weights. Circuit 1 had 36 nonzero synapses. Circuits 2, 3A and 3B had 96 nonzero synapses. For each of these four circuits, we generated a training set by presenting images to the circuit and recording the output responses. 
The rest of Section \ref{sec:alg} will analyze recovery performance of these architectures while varying the other factors of interest.

\subsection{Non-zero connection constraints are helpful for identification of some architectures}
\label{sec:num_params}

We first considered varying the number of free parameters in the ANN.  Intuitively, with more free parameters, it will be harder to recover the true network.  However, constraining to fewer parameters would require increasing amounts of domain knowledge.
%
%
%

Existing neuroscience domain knowledge constrains many connections to zero weight. In the retina, such connectivity constraints are implemented by the precisely organized anatomy of neurons. The so-called inner plexiform layer (IPL) is a meshwork of synapses between different unit types (bipolar, amacrine, and ganglion cells). Each type sends its axons and dendrites into a distinct lamina of the IPL, and neurons may connect only if they co-stratify in at least one lamina. This implements an  ``address book'' of allowable connectivity between cell types in the network \cite{siegert2009addressbook}. Using anatomical studies in the literature \cite{zhang2012oms, ghosh2003, bae2017eyewire, tsukamoto2017, franke2017, greene2016, kim2014ds, munch2009}, we have compiled such an address book for 36 retinal cell types (see appendix). A strong address book constraint translates to fewer free parameters.

For convenience, the first layer of weights ($W_1$) were also constrained so that each amacrine cell analogue in layer 2 only received input from a restricted spatial region of the output of layer 1. This is to remove permutation symmetry of $W_1$ and $W_2$, and it is also realistic, since it is known that amacrine cells in the retina typically pool inputs over a small region in space, not the entire visual field. Scaling symmetry of the weights was removed by fixing biases in place during training (see A.1.4 for details.)

Fig. \ref{fig:bigsim} shows our main results, where we also constrain the sign of each weight based on whether the corresponding biological synapse should be excitatory or inhibitory.  We see that 
all circuits can be approximately identified ($R_\text{sys} \geq 0.9$) given enough training data and if sufficiently constrained in the number of free parameters (around 500).  
Circuit 1, which is the only circuit without skip connections, always yields high system and support recovery (given enough training data) irrespective of the  number of free parameters.  However, the successful recovery of Circuits 2, 3A, and 3B exhibits a strong dependence on the number of free parameters. Our theory directly implies that Circuit 2 should be identifiable, which suggests that the non-convex optimization landscape of ANN training and the restricted input domain may be a significant factor here.

\begin{figure}  
\centering
 \includegraphics[width=0.49\textwidth]{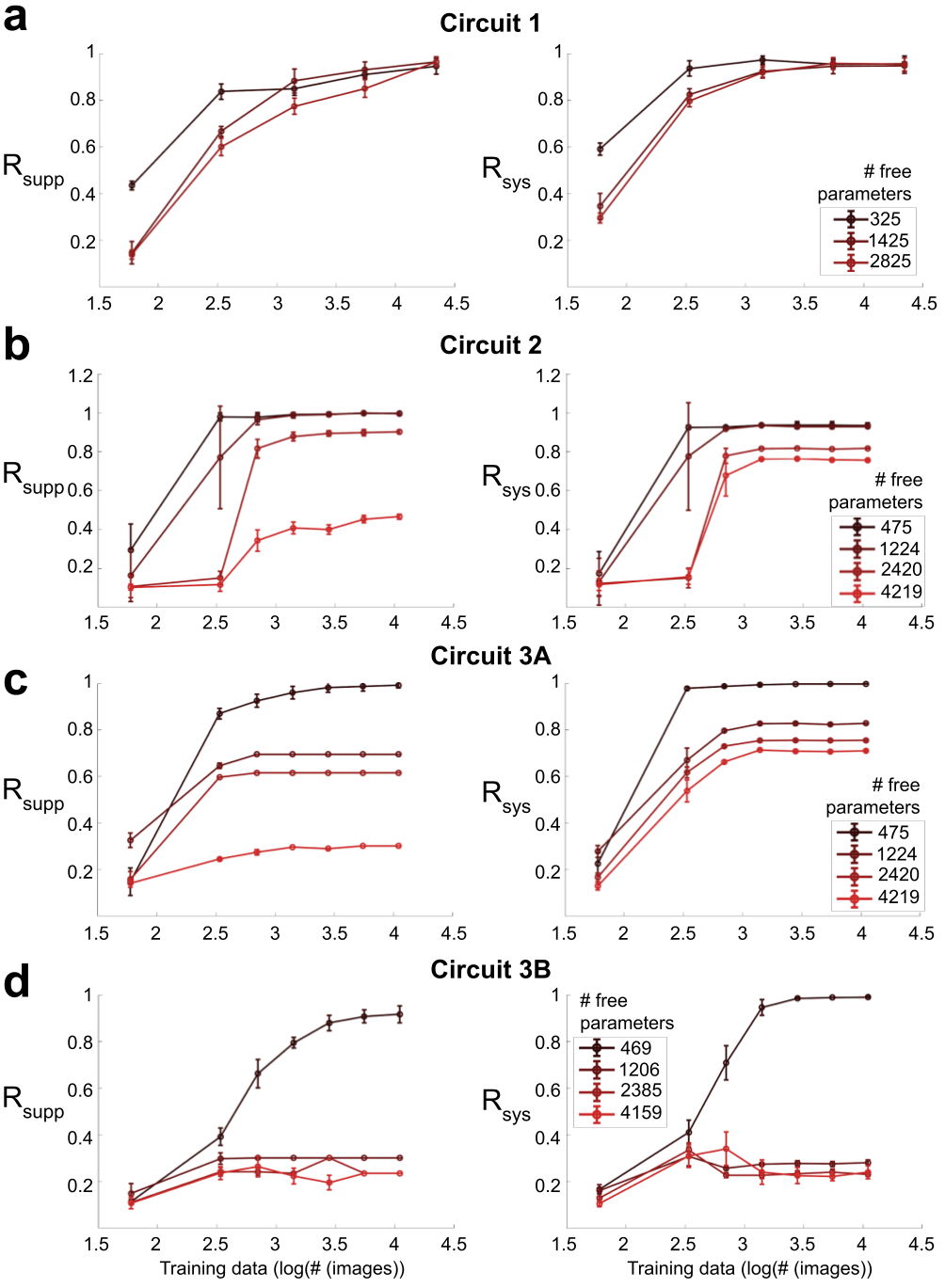}
 \vspace{-0.21in}
   \caption{System and support recovery scores (Definitions \ref{def:sysrs} and \ref{def:suprs}) for Circuits 1, 2, 3A and 3B respectively, using $\ell_1$-regularization and sign constraints. Each curve represents an ANN with 325--4219 free parameters (see legend). 
   The ANN was given varying quantities of free parameters (weights associated with synapses) and training data. For each circuit structure, a single training dataset was generated, then networks with varying numbers of weights at initialization were given access to some fraction of the training set and trained. System and support recovery scores were computed in each case. Mean and standard deviation of these scores are plotted for 10 training runs, each with a different random initialization. 
  One can reliably recover most of the system structure and parameters when there are around 500 free parameters (corresponding to putative synapses) in the system.
   }
  \label{fig:bigsim}
 \vspace{-0.05in}
\end{figure}

\subsection{Sign constraints are necessary for system identification}

Another way to constrain the network is to use sign constraints, due to abundant neuroscience domain knowledge about whether certain neurons are either excitatory or inhibitory \cite{wassle1991, gollisch2010, raviola1982}. In fact, for the retinal circuits we study in Section \ref{sec:mouse}, we have sufficient knowledge to sign constrain every weight.
Fig. \ref{fig:failure_modes} c\&d shows the results of an ablation study that removes the sign constraints.  We see a sharp contrast compared to Fig. \ref{fig:bigsim} in that removing the sign constraints causes system identification via \eqref{eq:optimizationproblem1} to fail.  
This finding is consistent with our theoretical exploration and Theorem \ref{thm:identifiability}, where the sign constraint was a crucial element in identifying the half planes generated by the ReLU units.

\begin{figure} 
 \includegraphics[width=0.45\textwidth]{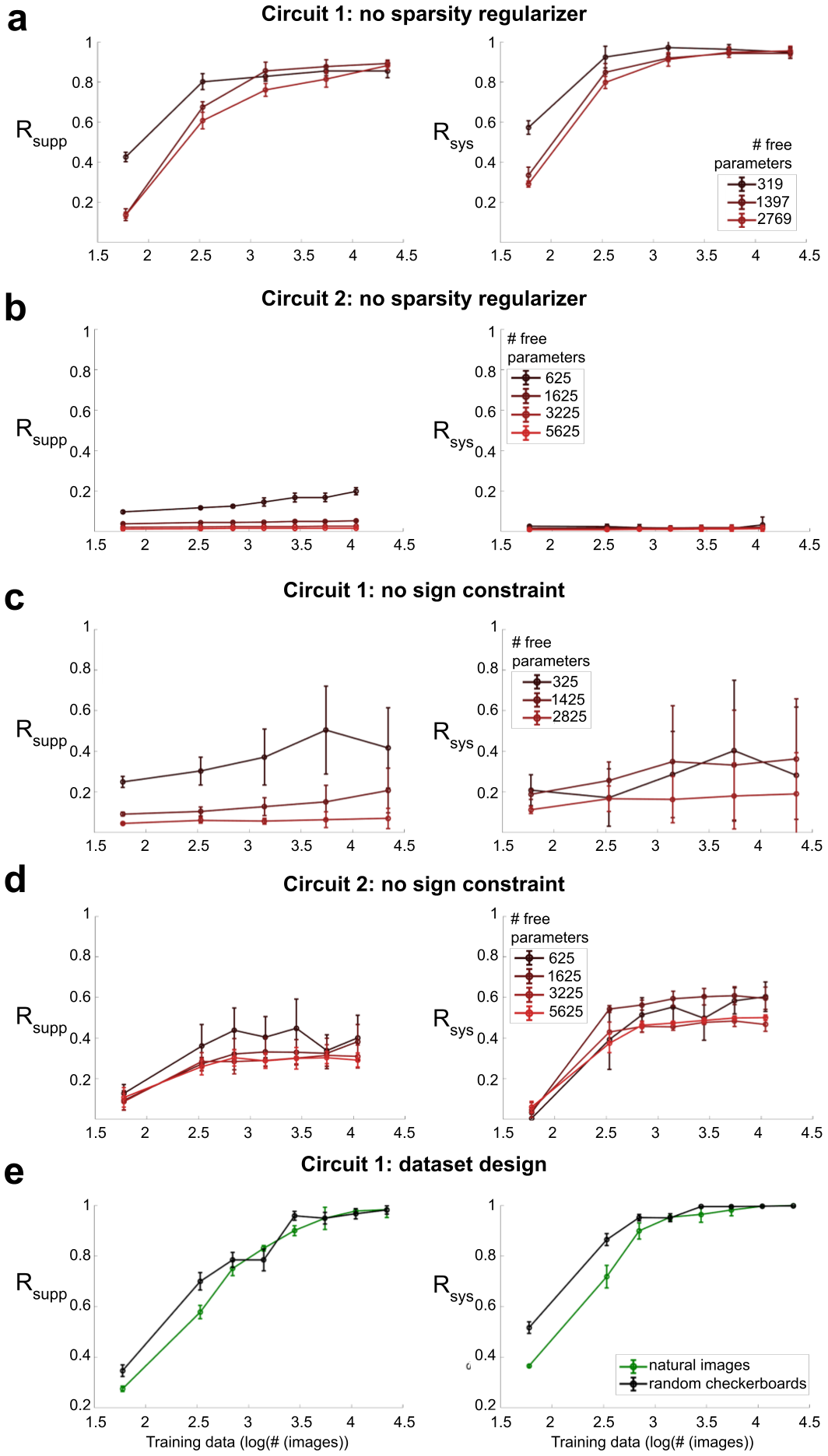}
 \vspace{-0.15in}
  \caption{\textbf{Results of ablation studies.  Successful system identification can be contingent on regularization, sign constraint and dataset type.} System identification for simulated retina-style circuits:
    {(a)} for Circuit 1 proceeds as normal without $\ell_1$-regularization; (b) for Circuit 2 is impaired without $\ell_1$-regularization; (c) for Circuit 1 is impaired without the sign constraint; (d) for Circuit 2 is impaired without the sign constraint.  In (e) we see that random noise images lead to better system identification of Circuit 1 than natural images when data is scarce.
  } 
  \vspace{-0.2 in}
  \label{fig:failure_modes}
\end{figure}

\subsection{Sparsity regularization is helpful for identification of some architectures}
 We next investigate the benefits of using $\ell_1$-regularization in practice.  Fig. \ref{fig:failure_modes} a\&b show the results of an ablation study that omits the use of $\ell_1$-regularization.  Compared to Fig. \ref{fig:bigsim}, we see that regularization never hurts system identification, and sometimes appears to be crucial in practice.  In particular, one can still perform identification without $\ell_1$-regularization for Circuit 1, which has no skip connections, but not for Circuit 2, which has skip connections.  We hypothesize that a theoretical characterization of sample-efficient system identification of neural circuits could depend in a non-trivial way on the presence or absence of skip connections.
 
 
 \subsection{Dataset design matters when data is scarce}
In visual neuroscience experiments one must choose among many possible images or movies with which to query the neural circuit to collect training data. Practically, biological experiments are time-limited, and therefore, the training dataset size is limited. Unsurprisingly, system identification of all four circuits showed a strong dependence on the size of the training dataset (Fig \ref{fig:bigsim}a-d).

Our final analysis in this section is to compare the efficacy of two very different data collection approaches: white-noise images vs natural photographs. White noise has a long history for system identification in engineering, whereas natural images better reflect the signal domain that the retina evolved to handle. White-noise stimuli were generated as random checkerboards. Each image was $100 \times 100$ pixels, and consisted of $5 \times 5$ grayscale checkers of random intensity. Natural grayscale images were taken from the COCO (Common Objects in Context) dataset \cite{lin2014coco}. An overparameterized artificial neural network with architecture matching Circuit 1 (Fig. \ref{fig:bigsim_circuit}a) was trained with varying quantities of data from both sets. When training data were plentiful, performance on the two datasets was comparable. However, when data were more scarce, random checkerboard stimuli led to better system identification (Fig. \ref{fig:failure_modes}e). Interpolating, we find that a system recovery score of at least $R_{\text{sys}}= 0.9$ can be achieved with about 400 checkerboard images, but requires 630 natural images.


\section{Case study on the mouse retina}
\label{sec:mouse}


\begin{figure*}
 \includegraphics[width=\textwidth]{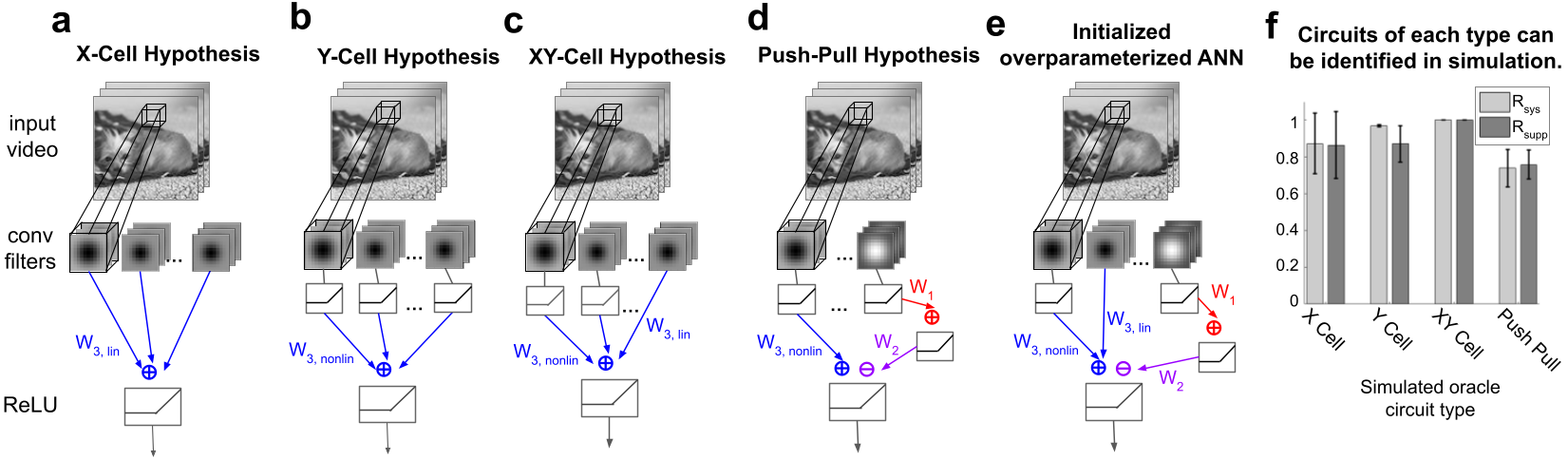}
 \vspace{-0.7cm}
  \caption{Hypothesized circuit architectures for the alpha ganglion cell. Each structure convolves the input video with a bank of spatiotemporal filters in the first layer. {\textbf (a)} X-Cell circuit: utilizes only linear units in the first layer. {\textbf (b)} Y-Cell circuit: utilizes only nonlinear units in the first layer. {\textbf (c)} XY-Cell circuit: utilizes both linear and nonlinear units in the first layer. {\textbf (d)} Push-pull circuit: utilizes nonlinear units in the first layer. Includes an additional inhibitory pathway. Structurally analogous to circuits 3A and 3B in Fig. \ref{fig:bigsim_circuit}b. (e) The ANN is initialized to include all components and connections used in all four hypotheses. (f) Structure and support recovery scores when the ANN in (e) is trained on data generated by simulated circuits with architecture matching (a)-(d).}
  \label{fig:realdata_circuit}
\end{figure*}

We conclude this paper with a case study on  the mouse retina.  Our goal here is to demonstrate the practicality of fine-grained neural system identification in real biology settings.  In many cases, one does not have the actual ground truth neural circuit, and so we evaluate by confirming that we can recover models that are consistent with known neuroscience domain knowledge.

We specifically studied four types of mouse retinal ganglion circuits:  the sustained OFF, sustained ON, transient OFF, and transient ON subtypes of the alpha retinal ganglion cell (sOFF$\alpha$, sON$\alpha$, tOFF$\alpha$, and tON$\alpha$ RGCs). The circuitry of interneurons that provide input to these cells is not yet known in retinal biology. However, there is strong evidence that the tOFF$\alpha$ cell pools excitatory input from OFF bipolar cells and inhibitory input from ON bipolar cells via amacrine cells \cite{munch2009}. These cell types present an interesting challenge for system identification. Namely: can we use this technique to (1) confirm what is known about the tOFF$\alpha$ cell circuit (2) provide new specific circuit hypotheses for further study? 


We studied four hypothetical models of the $\alpha$ cell based on current domain understanding: the X-Cell Model, the Y-Cell Model, the XY-Cell Model, and the Push-Pull Model model (Fig. \ref{fig:realdata_circuit}a-d). These hypotheses are specific to alpha cell physiology, and therefore include some variations on the more generic architectures studied in Section \ref{sec:alg}. The X- and Y-Cell models were inspired by retinal circuits commonly described in the literature \cite{enrothcugell1966}. They are characterized by their use of linear or nonlinear units in layer 1 respectively. These units correspond to bipolar cells, and it is known that some bipolar cells report light intensity linearly, while others do so nonlinearly \cite{baccus2002}. The XY-Cell model is a hybrid of these two, and the first layer of this circuit includes both linear and nonlinear units. The Push-Pull model uses both excitation from nonlinear units as well as inhibition from other nonlinear units through an intermediate nonlinear unit (structurally similar to Circuits 3A and 3B in Fig. \ref{fig:bigsim_circuit}b).

\subsection{Additional practical considerations}
Leveraging insights from Section \ref{sec:alg}, we designed the learning problem to maximize the probability of success. The architecture of the artificial neural network (ANN) was designed to include the necessary synapse types for each of the four hypothesized circuits. The ANN was initialized as a superposition of all components and connections included in the X-Cell, Y-Cell, XY-Cell, and Push-Pull models. It therefore had the opportunity to converge to one of the four hypotheses, any hybrid thereof, or a completely different circuit. We initialized the ANN with enough artificial ``bipolar'' and ``amacrine'' cell units to tile the entire stimulus (though we know that the true circuit will only tile a small portion of it) while adhering to known properties of retinal cells including receptive field size and approximate synaptic convergence ratios \cite{dunn2014convergence}. The reason for this design choice is to ensure that the model has the capacity to contain the true circuit. This network had 417 free parameters (comparable to the darkest curves in Figures \ref{fig:bigsim} and \ref{fig:failure_modes}). 

The ANN also included a temporal processing component to account for the temporal correlations inherent to ganglion cell responses. The stimuli used were temporal sequences of white noise images and the response collected from the ganglion cells was in the form of firing rate over time (Fig. \ref{fig:realdata_results}b). The training sets were comprised of 4500-7000 2-second videos and corresponding ganglion cell responses. Our results from Section \ref{sec:alg} suggest that, under these conditions, most of the circuit should be recovered.

\subsection{Sanity check simulation: each circuit structure can be correctly identified}
 As a preliminary test, each of the four structural models (Y-cell, X-cell, XY-cell and Push-pull) was simulated. A training set of ganglion cell firing rates in response to visual stimuli was generated. An ANN initialized to include all connections necessary to replicate any of the four models, as well as various other ``in-between'' structures (Fig. \ref{fig:realdata_circuit}e), was trained in turn on each training set. It correctly selected the appropriate circuit hypothesis each time it was trained, regardless of random initialization, and achieved high values of $R_{\text sys}$ and $R_{\text supp}$ in every case (Fig. \ref{fig:realdata_circuit}f). By comparison, $R_{\text sys}$ scores for hypotheses different from the model that generated the data were considerably lower (Fig. \ref{fig:intramodel}). We therefore  conclude that one can reliably identify which structural model gave rise to a dataset. 

\subsection{System identification with real data confirms the known circuitry of the \texorpdfstring{tOFF$\alpha$} cell and suggests a new pathway}


Prior research in neuroscience suggests that the tOFF$\alpha$ cell uses a Push-Pull motif in its circuitry \cite{munch2009}. Training the ANN described above (Fig. \ref{fig:realdata_circuit}e) on data from mouse tOFF$\alpha$ cells corroborates this. The network was trained on data from 14 tOFF$\alpha$ cells with 5-10 random initializations for each cell. In every case, the resulting structure was not of any of the four types hypothesized, but instead a hybrid we call the XY-Push-Pull type. Thus the neural network replicated the results of the biological experiments performed in \cite{munch2009}, but also suggested an additional pathway, namely, the inclusion of linear OFF bipolar cells which excite the tOFF$\alpha$ ganglion cell. Training the same network on data from other alpha cell types suggested the same circuit motif, but with varying relative weights. Further biological experiments could be conducted to confirm this finding.  Thus, fine-grained identification of neural circuits can serve as a guide for these experimental efforts, allowing a biologist to prioritize hypotheses and direct future investments.

\begin{figure}
\centering
 \includegraphics[width=0.49\textwidth]{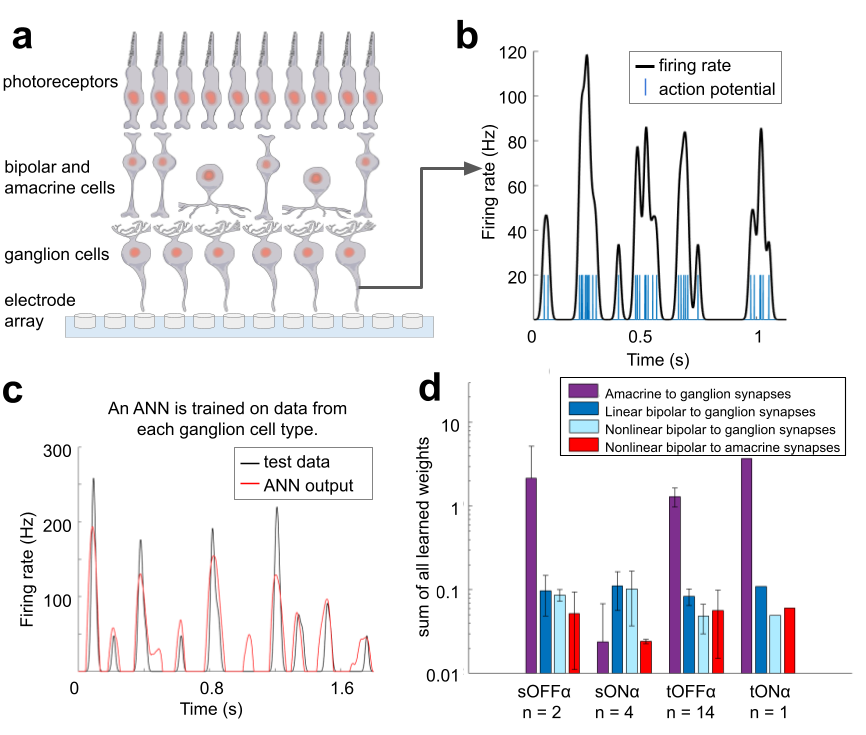}
  \vspace{-0.27in}
   \caption{\textbf (a) Schematic of mouse retina on microelectrode array. The retina is positioned so that the ganglion cell layer is flush with the array of electrodes. Hundreds of ganglion cells can be simultaneously recorded. (b) Example of signal from a single mouse retinal ganglion cell. Blue ticks are single action potentials. Black curve shows the smoothed firing rate of the neuron over time. (c) Example of quality of fit to ganglion cell firing rate. (d) Summed weights for artificial neural networks trained on datasets recorded from each of the four ganglion cell types in mouse retina.}
  \label{fig:realdata_results}
\end{figure}


\section{Conclusion}
We have demonstrated the potential to perform accurate fine-grained system identification of nonlinear neural circuits. 
It can both validate existing understanding of a biological system, as well as put forth hypotheses for biological circuitry. Future theoretical directions include studying finite-sample guarantees and understanding the dependence of such guarantees on skip connections.  Future modeling directions include combining with biophysical models \cite{schroderneurips} into a unified framework.  %
More generally, this method of system identification may serve as a useful tool for scientists aiming to gain microcircuit-level understanding of systems about which only coarse-grained information is known.

Biology has seen an explosion of technological development in the past decade. Alongside this advancement in electrophysiological technologies and anatomical tools, we will need to put forth new, more powerful computational techniques that can convert large datasets into biological insights. While the retina is still among the best-understood regions of the brain, we anticipate that this level of understanding may soon be gained about other biological systems, as well as in other scientific domains. As this begins to unfold, system identification using neural networks will serve as a powerful tool for hypothesis generation and elimination. 
Rather than develop highly specialized algorithms for each specific application, perhaps accessible deep learning tools can be combined with regularizing domain knowledge, and used more commonly by scientists to uncover the secrets of nonlinear circuits.









\begin{acks}
This material is based upon work supported by the National Science Foundation Graduate Research Fellowship under Grant No. 1745301. Any opinion, findings, and conclusions or recommendations expressed in this material are those of the author(s) and do not necessarily reflect the views of the National Science Foundation. This work was also funded by a cloud computing grant from Amazon Web Services in collaboration with the Information Science and Technology initiative at Caltech. This work was also supported by the Simons Collaboration on the Global Brain (grant 543015 to Markus Meister).  Jeremy Bernstein was supported in part by an NVIDIA Fellowship and by NASA TRISH-RFA-BRASH 1901. Yu-Li Ni was supported by Taipei Veterans General Hospital-National Yang-Ming University Excellent Physician Scientists Cultivation Program, No. 103-Y-A-003. The authors thank James Parkin for providing original illustrations of retinal neurons.
\end{acks}

\bibliographystyle{ACM-Reference-Format}
\bibliography{refs}

\pagebreak

\appendix
\section{Appendix}

\subsection{Reproducibility: Empirical study}

\subsubsection{Computing methods and data repository}
All code for this project was run on Amazon Web Services p2.xlarge instances, equipped with the Deep Learning AMI. All trainings and simulations were run using Tensorflow v1.15.0 \cite{tensorflow2015-whitepaper} and Python 2.7. All trainings were done using the Adam optimizer \cite{kingma2014adam}. Precise hyperparameters used to generate the figures in this paper can be found in the KDD\_readme file. Training code and evaluation code can be found at \url{https://github.com/dbagherian/neural-sys-id}. Data files and examples of trained models can be found at \url{https://doi.org/10.22002/D1.1989}. 

All hyperparameters were selected via cross validation. During cross-validation, the learning rate was varied between $10^{-4}$ and $10^{-2}$, while the $\ell_1$-regularizer strength parameter, $\lambda$, was varied between $10^{-7}$ and $10^{2}$. The README file in the link above lists all hyperparameter values needed to recreate every figure in this paper.

\subsubsection{Complexity and runtimes}
The complexity of system identification can be understood in units of ``network training runs.'' The number of free parameters in each network varied between 300 and 6000. Each training in figures \ref{fig:bigsim}, \ref{fig:failure_modes}, \ref{fig:realdata_circuit} and \ref{fig:realdata_results} was run 5-10 times with different random initializations. Average runtime for these simulated trainings varied between 5 minutes and 4 hours, depending on the quantity of training data used.


\subsubsection{Datasets}
In the simulations in section \ref{sec:alg}, between 60 and 22000 training examples were used, and between 2000 and 2080 examples were used as a test set. Stimuli used for training were of two forms: random checkerboard and natural images. Random checkerboard stimuli were 100$\times$100 pixel images with 5$\times$5 pixel grayscale checkers of continuously varying intensity in the empirical simulations. Natural images were derived from the COCO (Common Objects in Context) dataset \cite{lin2014coco}. Images from this set were made grayscale and randomly cropped to 100$\times$100 pixels.

\subsubsection{Additional details of empirical simulations}
The simulated biological circuits studied in section \ref{sec:alg} were inspired by the retina in multiple ways. In the first layer of these circuits, spatial convolutional filters are applied to input images. These filters were inspired by retinal spatial receptive fields with antagonistic surrounds \cite{meister1999}, and composed of two opposing 2D Gaussian functions. Three different convolutional filters were used in the oracle networks with Gaussians of varying diameters. 28 different convolutional filters were included in the ANN. 14 of these were OFF filters, activated by dark spots in the image, and 14 were ON filters, activated by light spots in the image. 

The remaining two layers took linear combinations of the units in the prior layer. All layers included ReLU nonlinearities. The output of the network was a continuously-varying ``firing rate'' value for the single output neuron, modeling the rate at which the mouse retinal ganglion cell fires action potentials.

 In order to remove some degeneracy from the space of circuit models being searched, it was necessary to fix the biases in place during training. The toy example in Fig. \ref{fig:bias_fix} illustrates the need for this. In order to efficiently compute the structure recovery score, $R$, we needed to remove this type of degeneracy from the space of circuit structures being searched. When $b$ is fixed in place during training, this degeneracy is removed.
 
 \begin{figure}
 \centering
 \includegraphics[width=0.3\textwidth]{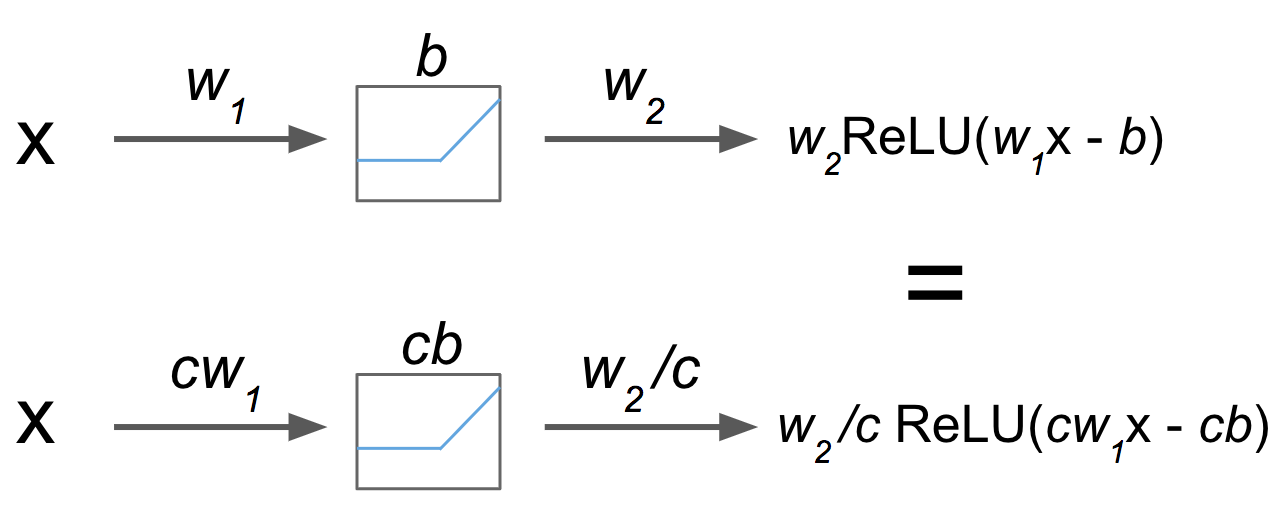}
 \vspace{-0.2 in}
  \caption{{Two toy networks with differing parameter sets which produce the same output for an input, $x$.} }
  \vspace{-0.2 in}
  \label{fig:bias_fix}
\end{figure}

\begin{figure}

\begin{center}
\begin{tabular}{| c | c | c | c | c|} 
\hline
 & X-Cell & Y-Cell & XY-Cell & Push-Pull \\ 
 \hline
 X-Cell & 1.00 & 0.00 & 0.80 & 0.00 \\ 
 \hline
 Y-Cell & 0.00 & 1.00 & 0.40 & 0.07 \\
 \hline
 XY-Cell & 0.80 & 0.40 & 1.00 & 0.04 \\
 \hline
 Push-Pull & 0.00 & 0.07 & 0.04 & 1.00 \\
 \hline
\end{tabular}
\end{center}
\caption{For each simulated circuit used in the simulations in figure \ref{fig:realdata_circuit}, the  $R_\text{sys}$ was computed with every other circuit hypothesis. This serves as a ``floor'' to better contextualize the $R_\text{sys}$ of the trained ANN's in each case.}
\vspace{-0.5cm}
\label{fig:intramodel}
\end{figure}

\subsection{Reproducibility: Mouse retina case study}

\subsubsection{Datasets}
In section \ref{sec:mouse}.2, training data consisted of simulated responses to 60 Hz time-varying randomly flickering bar stimuli with varying degrees of spatial correlation. 85 minutes of simulated data were used to train, and 13 minutes to compute out-of-sample loss. (Bipolar cells in the retina integrate stimulus information in a time window of about 0.2 seconds \cite{baccus2002}.) In section \ref{sec:mouse}.3, about 3 hours of electrophysiological data were used in each training set. 4.5 min of data were used to compute out-of-sample loss.  To collect these data, wild type C57/Bl6 mice were euthanized with cervical dislocation. The retina was dissected out from the eye in an oxygenated nutrient bath, then mounted on a 256-channel electrode array. Visual stimuli were projected onto the retina, while action potentials were recorded extracellularly from the ganglion cell layer \cite{meister1994mea}. We were able to keep the retina alive for 2-5 hours and to record training data in that time. Spike sorting was performed with Kilosort \cite{pachitariu2016kilosort}. $\alpha$ cells were identified post hoc by their responses to a battery of visual stimuli. Training data consisted of responses to time-varying 1D random noise stimuli with varying degrees of spatial correlation. Spike trains were convolved with a Gaussian filter to produce a continuously varying firing rate (Fig. \ref{fig:realdata_results}b). The trained ANN explained a large fraction of the variance of each dataset. By cell type, these fractions are: sOFF$\alpha$ 0.75 $\pm$ 0.04, sON$\alpha$ 0.82 $\pm$ 0.17, tOFF$\alpha$ 0.89 $\pm$ 0.11, tON$\alpha$ 0.65 $\pm$ 0.00.

 \subsubsection{Parameter reduction in the first convolutional layer}
In order to reduce the number of learned parameters and speed up training on the mouse data, we pulled the convolution out of the first layer of the trained network in the following way.

Since the functional form of bipolar cell spatiotemporal filters is well understood in biology \cite{baccus2002, real2017, franke2017, keat2001}, one can parameterize the temporal convolutional filters in the first layer of the ANN using a basis set of stretched sinusoid functions of the form:
\begin{equation}
T_j = \left\{
\begin{array}{ll}
      \sin\Big(\pi j\Big(2 \frac{t}{\tau}-\Big(\frac{t}{\tau}\Big)^2\Big)\Big), &  \text{ for } 0\leq t \leq \tau \\
      0 & \text{ otherwise } \\
\end{array} 
\right. 
\end{equation}

for $j = 1, 2, ..., 16$ \cite{keat2001}.
Let $s_i$ denote the $i$th stimulus video in the training set. The result of the first layer convolution could therefore be rewritten as:

\begin{equation}
    y^1_i = \sum_{j=1}^{16} \beta_j(s_i \ast T_j)
\end{equation}

where $\ast$ denotes the convolution operation. The $\beta_j$ could then be initialized in a biologically realistic regime, derived from bipolar cell electrophysiological recordings \cite{franke2017}, but allowed to vary slightly as the network fit the data. The input to the ANN, therefore, can be thought of as $\{s_i \ast T_j\}_{j=1}^{16}$, rather than $s_i$. This removes the first layer convolution from the backpropagation during training and reduces the number of parameters to learn.

\subsection{Retinal address book}
We constructed a ``retinal address book'' based on an idea first presented in \cite{siegert2009addressbook}. See section \ref{sec:alg} for details.
\balance
Because these anatomical data come from a number of studies with varying forms and thoroughness of anatomical data published, this address book was constructed by painstaking manual inspection. For example, different authors divided the IPL into different numbers of sublaminae, and set their boundaries in different locations. 

We selected a sublamination scheme that was compatible with each of these. In this scheme, the mouse IPL was divided into six sublaminae: outer marginal (0.0-0.28 normalized depth), outer central (0.28-0.47 normalized depth), inner central (0.47-0.65 normalized depth), inner marginal (0.65-1.0 normalized depth) and the ON and OFF ChAT bands (defined by limits of choline acetyltransferase (ChAT) expression). With this sublamination scheme, we were able to neatly define a binary stratification profile for each cell type under study, and to therefore create a binary address book You can view this address book at \url{https://www.dropbox.com/s/tryk5vo856ny42b/IPL\%20Address\%20Book.png?dl=0}


One could imagine, however, a future in which stratification profiles are given in units of normalized IPL depth, with confidence intervals or some other measure of uncertainty, for every retinal cell type. From such a dataset, one could replace the binary entries of this table with a continuously varying measure of co-stratification and uncertainty.
\subsection{Proof of Theorem \ref{thm:identifiability}}
\label{sec:proof}

In this section, we prove the identifiability result of Section  \ref{sec:theory-result}. See \citep{sussman,albertini,fefferman,helmut, rolnick} for related prior work. Recall the definition of network (\ref{eq:2-layer}):
\begin{equation*}
    f(\mat{x}) := \mat{W_2}\max(\mat{0}, \mat{W_1}\mat{x} + \mat{b_1}) +
  \mat{W_3x} + \mat{b_2},
\end{equation*}
for weights $\mat{W_1}\in\R^{n_1\times n_0}$, $\mat{W_2}\in\R^{n_2\times n_1}$, 
$\mat{W_3}\in\R^{n_2\times n_0}$ and biases $\mat{b_1}\in\R^{n_1}$, 
$\mat{b_2}\in\R^{n_2}$. 

\identifiability*
\begin{proof}
    To begin, observe that each component of the vector output $f(\mat{x})$ is a piecewise linear function of $\mat{x}$. We will first show that the transitions of the $\max$ function may be identified with the transitions between linear regions, thereby allowing identification of $\mat{W_1}$ and $\mat{b_1}$.
    
    Since condition (i) excludes the possibility that any row of $\mat{W_1}$ is entirely zero, every row of $\mat{W_1}$ will cause a transition of the max function. For the $j$th row of $\mat{W_1}$, the transition occurs on a hyperplane $H_j$ in the input domain:
    $$H_j := \left\{\mat{x} :  \sum_{k=1}^{n_0} \mat{W_1}^{(jk)}\mat{x}^{(k)}+\mat{b}^{(j)} = 0\right\} \subset \R^{n_0}.$$
    
    The transition of the $\max$ function corresponding to the $j$th row of $\mat{W_1}$ will only affect output component $f^{(i)}$ if $\mat{W_2}^{(ij)}\neq0$. We know by condition (i) that such an index $i$ exists since the $j$th column of $\mat{W_2}$ is not entirely zero. Therefore all rows of $\mat{W_1}$ correspond to a nonlinear transition in at least one output component of $f$. Since the rows of $\mat{W_1}$ are not collinear by condition (ii), we are now sure that the boundaries between linear regions of $f$ correspond to $n_1$ distinct hyperplanes that partition input space. Since the formula of a hyperplane is unique up to scalings, we may recover $\mat{b_1}$ and the rows of $\mat{W_1}$ up to scalings. Since we do not know in which order the hyperplanes should be listed, the recovery is only unique up to scalings \textit{and} permutations.
    
    Since we have identified $\mat{W_1}$ and $\mat{b_1}$ up to scaled permutations of the rows of $\mat{W_1}$, we may now query $f$ in the region of input space where all components of $\max$ return zero. This region surely exists since by conditions (i) and (iii) combined, all rows of $\mat{W_1}$ point into the positive orthant. Therefore, for $\mat{x}$ sufficiently far into the negative orthant, we have that:
    $$f(\mat{x}) = \mat{W_3x} + \mat{b_2}.$$
    The gradient $\nabla_{\mat{x}}f(\mat{x})$ in this region identifies $\mat{W_3}$, at which point $\mat{b}_2$ may be identified via:
    $$\mat{b_2} = f(\mat{x}) - \mat{W_3x}.$$
    All that remains is to identify $\mat{W}_2$. Consider a point $\mat{x^*}$ on the $j$th hyperplane $H_j$ that is far away from the other hyperplanes $\{H_{i}\}_{i \neq j}$. Such a point exists by condition (ii). Let $\mat{x^+}$ and $\mat{x^-}$ denote points in the local neighbourhood of $\mat{x^*}$ but on the positive and negative side of $H_j$, respectively. Observe that:
    $$\frac{\partial f^{(i)}}{\partial \mat{x}^{(k)}}(\mat{x^+})-\frac{\partial f^{(i)}}{\partial \mat{x}^{(k)}}(\mat{x^-})=\mat{W_2}^{(ij)}\mat{W_1}^{(jk)}.$$
    
    Therefore $\mat{W}_2^{(ij)}$ may be identified by measuring the change in gradient of  $f^{(i)}$ across the $j$th hyperplane $H_j$. Of course, since the rows of $\mat{W_1}$ are known only up to a permutation and scale, we will be unsure of the columns of $\mat{W_2}$ up to the same symmetries. 
    
    With $\mat{W_3}$ and $\mat{b_2}$ identified exactly, and $\mat{W_2}$, $\mat{W_1}$ and $\mat{b_1}$ identified up to a permutation and scale, we are done.
\end{proof}

\end{document}